\let\doendproof\endproof
\renewcommand\endproof{~\hfill\qed\doendproof}
\def\computationproblem#1#2#3{% {problem_name}{input}{output}
  \begin{center}
  \begin{tabular}{rp{0.8\textwidth}}
  %\hline
  {\sc Problem:\enspace}&#1\\
  {\sc Input:\enspace}&#2\\
  {\sc Question:\enspace}&#3\\
  %\hline
  \end{tabular}
  \end{center}
}
\def\cNP{\hbox{\rm \sffamily NP}}
\def\int{\hbox{\bf \rm \sffamily INT}}
\def\ca{\hbox{\bf \rm \sffamily CA}}
\begin{document}

\title{A note on simultaneous representation problem for interval and circular-arc graphs}
\author{Jan Bok \inst{1}%\orcidID{0000-0002-7973-1361}
\and Nikola Jedličková \inst{2}}
\institute{
Computer Science Institute of Charles University, Malostransk\'{e} n\'{a}m\v{e}st\'{i} 25, 11800, Prague, Czech Republic. Email: \email{bok@iuuk.mff.cuni.cz}
\and
Department of Applied Mathematics, Charles University, Malostransk\'{e} n\'{a}m\v{e}st\'{i} 25, 11800, Prague, Czech Republic. Email: \email{jedlickova@kam.mff.cuni.cz}
}

\maketitle

\begin{abstract}
In this short note, we show two \cNP-completeness results regarding the \emph{simultaneous representation problem}, introduced by Lubiw and Jampani \cite{jampani2009simultaneous,jampani_interval}.

The simultaneous representation problem for a given class of intersection graphs asks if
some $k$ graphs can be represented so that every vertex is represented by the
same interval in each representation. We prove that it is \cNP-complete to decide this for 
the class of interval and circular-arc graphs in the case when $k$ is a part of the input
and graphs are not in a sunflower position.
\end{abstract}

\section{Introduction}

Jampani and Lubiw introduced \emph{the simultaneous representation problem}
defined for any class of intersection graphs in
\cite{jampani2009simultaneous}.

\begin{definition}
Let $\mathcal{C}$ be a class of intersection graphs. Graphs 
$G_1, \ldots, G_k \in \mathcal{C}$ are simultaneously representable (or simultaneous) if there exist representations $R_1, \ldots, R_k$ of $G_1, \ldots, G_k$ such that 
$$\forall i,j \in \{1, \ldots, k\} \ \forall v \in G_i \cap G_j: R_i(v) = R_j(v).$$
\end{definition} 

The \emph{simultaneous representation problem} for a class $\mathcal{C}$ asks
if given $k$ graphs $G_1, \ldots, G_k \in \mathcal{C}$ are simultaneous. We
distinguish whether $k$ is fixed or if it is a part of the input. The problem can be further
divided into two cases, depending on whether graphs are in 
\emph{sunflower position} or not.

\begin{definition}
We say that graphs $G_1, \ldots, G_k$ are in a \emph{$k$-sunflower position}
if there exists a set of vertices $I$ such that $G_i \cap G_j = I $ for every
$i \neq j$. Otherwise we say that these graphs are in a non-sunflower
position.
\end{definition}

Jampani and Lubiw
were first to consider this problem and solved it for chordal, comparability, permutation
\cite{jampani2009simultaneous} (in sunflower configuration for both possibilities -- $k$ fixed and $k$ on the input) and then for interval graphs \cite{jampani_interval} (sunflower position and $k = 2$). The algorithm for 2 interval
graphs was later improved by Bl{\"a}sius and Rutter.

A summary of known results for various type of graphs in sunflower position is in
Table \ref{t}. We point the reader to the PhD thesis of Jampani \cite{jampani2011simultaneous} for a broader introduction to simultaneous
representation problem.

\vspace{-0.2cm}
\begin{table}
\begin{tabular}{|l|l|l|}
\hline
\textbf{class of graphs} & \textbf{2 graphs} & \textbf{$k$ graphs, $k$ not fixed} \\ \hline
chordal graphs & $O(n^3)$ \cite{jampani2009simultaneous}         & \cNP-hard \cite{jampani2009simultaneous}          \\ \hline
interval graphs          & $O(n^2 \log n)$ \cite{jampani_interval}, improved to $O(n)$ in \cite{blasius2016simultaneous}          &  open         \\ \hline
comparability graphs          & $O(nm)$ \cite{jampani2009simultaneous}         & $O(nm)$ \cite{jampani2009simultaneous}         \\ \hline
permutation graphs          & $O(n^3)$ \cite{jampani2009simultaneous}        & $O(n^3)$ \cite{jampani2009simultaneous}         \\ \hline
\end{tabular}
\vspace{0.2cm} 
\caption{A summary of known results for graphs in sunflower position.}
\label{t}
\end{table}
\vspace{-1.2cm}

\section{Results}

We show that the simultaneous representation problem for $k$ interval or
$k$ circular-arc graphs in non-sunflower position where $k$ is a part of the
input is \cNP-complete.

For both \cNP-completeness results we use a reduction from
\textsc{TotalOrdering}. Opatrny proved that \textsc{TotalOrdering} is
\cNP-complete \cite{opatrny1979total}.

\computationproblem
{\textsc{TotalOrdering} -- Total Ordering Problem}
{A finite set S and a finite set T of triples from S}
{Does there exist a total ordering $<$ of S such that for all triples $(x,y,z) \in T$ either $x < y < z$ or $x > y > z$?}

\subsection{Interval graphs}

\computationproblem
{\textsc{SimRep}(\int) -- Simultaneous representation problem for interval graphs}
{Interval graphs $G_1, \ldots, G_k$.}
{Do there exist interval representations $R_1, \ldots, R_k$ of $G_1, \ldots, G_k$ such that $$\forall i,j \in \{1, \ldots, k\} \ \forall v \in G_i \cap G_j: R_i(v) = R_j(v)?$$}

\begin{theorem}
\textsc{SimRep}(\int) for $k$ interval graphs in a non-sunflower position where $k$ is not fixed is \cNP-complete.
\end{theorem}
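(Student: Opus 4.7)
The plan is to reduce from \textsc{TotalOrdering}, using a small path gadget per triple. Given an instance $(S, T)$, for each $s \in S$ introduce a vertex $v_s$ that will be shared across graphs, and for each triple $t = (x, y, z) \in T$ construct an interval graph $G_t$ on vertices $\{v_x, v_y, v_z, a_t, b_t\}$ whose edges form the induced path $v_x - a_t - v_y - b_t - v_z$; here $a_t, b_t$ are private to $G_t$. Each $G_t$ is obviously an interval graph, and \textsc{SimRep}(\int) lies in \cNP\ since one can guess interval representations with polynomially bounded integer endpoints and verify consistency in polynomial time.

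The crucial property is the forcing behaviour of the gadget: in any interval representation $R_t$ of $G_t$, the intervals $R_t(v_x), R_t(v_y), R_t(v_z)$ are pairwise disjoint (being an independent set in $G_t$) and the left-to-right order of their left endpoints places $v_y$ strictly between $v_x$ and $v_z$. Indeed, since $a_t$ meets $v_x$ and $v_y$ but avoids $v_z$, the interval of $v_z$ cannot lie between those of $v_x$ and $v_y$; symmetrically, $v_x$ cannot lie between $v_y$ and $v_z$. This is the standard fact that an induced $P_5$ in an interval graph is represented with its vertices in path order (up to reflection).

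For the forward direction, given a valid total order $<$ on $S$, place each $v_{s_i}$ (the $i$-th element) at the interval $[3i, 3i+1]$. For each triple $t = (x, y, z)$, assume without loss of generality that $x, y, z$ are increasing in $<$, and place $a_t$ starting inside $v_x$ and ending in the left half of $v_y$, and $b_t$ starting in the right half of $v_y$ and ending inside $v_z$. One checks that $R_t$ realises exactly $G_t$ and that the shared intervals agree across all representations. For the converse, from any simultaneous representation define the global order on $S$ by sorting the shared intervals by left endpoint (with arbitrary tiebreaking, which is only ever needed for pairs of vertices that never co-occur in a triple); the gadget-forcing property then directly yields that this order satisfies every triple constraint of $T$.

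The main technical step is the forcing lemma for the gadget, which reduces to a short case analysis on the possible left-to-right arrangements of $R_t(v_x), R_t(v_y), R_t(v_z)$. A secondary concern is that the produced instance must be in non-sunflower position; this holds automatically as soon as $T$ contains two triples with differing patterns of shared elements, and in any case can be enforced by padding \textsc{TotalOrdering} with a constant number of dummy elements and triples without affecting the yes/no answer.
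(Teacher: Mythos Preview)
Your proposal is correct and follows the same overall strategy as the paper: reduce from \textsc{TotalOrdering} by attaching to each triple a small interval-graph gadget that forces the middle element to be represented between the other two. The differences are cosmetic. The paper's per-triple gadget has six vertices (a vertex $b_i$ adjacent to all three triple elements together with pendants $a_i$ on $x_i$ and $c_i$ on $z_i$) rather than your $P_5$, but the betweenness-forcing argument is the same. More notably, the paper adds an extra graph $G_0=(S,\emptyset)$ shared with every gadget; this makes all elements of $S$ into pairwise disjoint intervals and lets one read off the total order directly, whereas you omit $G_0$ and instead sort the shared intervals by left endpoint, arguing (correctly) that ties can only occur between elements never co-occurring in a triple. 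Your variant is slightly leaner; the paper's $G_0$ makes the order-extraction step a one-liner. Either way the reduction goes through.
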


\begin{proof}
The problem is clearly in \cNP\ as we can easily check in polynomial time if given
representations are simultaneous.

Now, let $I_{\textrm{TO}}$ be an instance of \textsc{TotalOrdering}. Let us
set $s:= |S|$ and $t := |T|$. We denote by $(x_i, y_i, z_i)$ each triple for
$i \in \{1, \ldots, t\}$.  We will construct an instance $I_S$ of
\textsc{SimRep}(\int).

We define graphs $G_0, G_1, \ldots, G_t$ in the following way.
\begin{itemize}
\setlength{\itemsep}{-5pt}
\item $G_0 := (S , \emptyset)$
\item $G_i := (V_i, E_i)$ for each $0 < i \leq t,$ where 
\begin{itemize}
\setlength{\itemsep}{-4pt}
\item $V_i := \{x_i, y_i, z_i, a_i, b_i, c_i\},$
\item $E_i := \{x_ib_i, y_ib_i, z_ib_i, x_ia_i, z_ic_i\}$ 
\end{itemize}
\end{itemize}

We observe that graphs $G_i$ are interval graphs and thus this is indeed an
instance of \textsc{SimRep}(\int). See Example \ref{example} and Figure
\ref{fig:example1} for an illustration of this construction.

Now we can check that the following holds.
\begin{itemize}
\setlength{\itemsep}{-4pt}
\item
$G_0 \cap G_i = \{x_i, y_i, z_i\}$ for each $1 \leq i \leq t$,
\item
$G_i \cap G_j = G_i \cap G_j \cap G_0 = \{v \in S | v \in (x_i, y_i, z_i) \land v \in (x_j, y_j, z_j)\}$ for each $1 \leq i < j \leq t$.
\end{itemize}

We observe that these $t+1$ graphs are simultaneous if and only if the original instance of
\textsc{TotalOrdering} has a solution. We can read the linear order $<$ of $S$ from the representation of its corresponding vertices in $G_0$.

Thus \cNP-completeness is established.
\end{proof}

\begin{example} \label{example}
For an instance $I_{TO}$ of \textsc{TotalOrdering} where
\begin{align*}
S &= \{1,2,3,4,5\}, \\
T &= \{(5,1,2), (2,4,3), (1,4,3)\},
\end{align*}
we build an instance $I_S$ of \textsc{SimRep}(\int) as in Figure \ref{fig:example1}.
\end{example}

\begin{figure}
\centering
\includegraphics[scale=1.1]{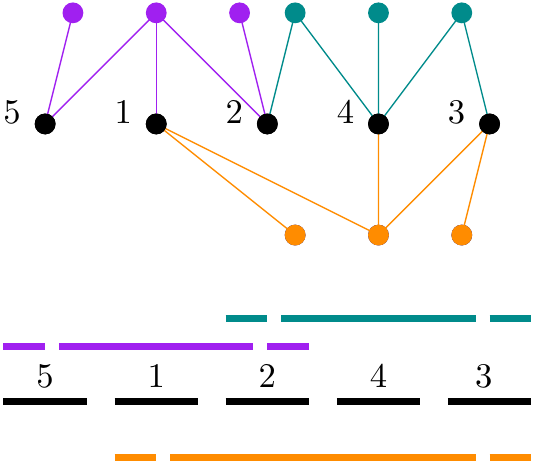}
\caption{In the top: A union of graphs for triples from Example \ref{example}. In the bottom: Their simultaneous interval representations.}
\label{fig:example1}
\end{figure}

\subsection{Circular-arc graphs}

\computationproblem
{\textsc{SimRep}(\ca) -- Simultaneous representation problem for circular-arc graphs}
{Circular-arc graphs $G_1, \ldots, G_k$.}
{Do there exist circular-arc representations $R_1,\ldots, R_k$ of $G_1,\ldots,G_k$ such that
$$\forall i,j \in \{1, \ldots, k\} \ \forall v \in G_i \cap G_j: R_i(v) = R_j(v)?$$}
\vspace{-15pt}
\begin{theorem}
\textsc{SimRep}(\ca) for $k$ graphs in a non-sunflower
position where $k$ is not fixed is \cNP-complete.
\end{theorem}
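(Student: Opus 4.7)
My plan is to mirror the interval-graph proof, reducing from \textsc{TotalOrdering}, but to adapt the construction so that any simultaneous circular-arc representation is forced to behave like an interval representation.

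I would reuse the gadget graphs $G_0, G_1, \ldots, G_t$ from the previous theorem. These are already interval graphs, hence circular-arc graphs. However, if applied directly to the circular-arc setting the gadget is too weak: the constraint that $y_i$ is ``between'' $x_i$ and $z_i$ only determines a cyclic arrangement of three arcs on the circle, and a cyclic ordering of $S$ does not in general encode a linear order consistent with the triple constraints. This is the main conceptual obstacle, and everything else in the proof is designed to neutralize it.

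To kill this cyclic ambiguity, I would introduce a single new vertex $\omega$ common to every graph, declared isolated in each of the modified graphs $G_0', \ldots, G_t'$. The resulting graphs remain interval (hence circular-arc) graphs. In any simultaneous circular-arc representation the simultaneity condition forces $\omega$ to have a single arc $A_\omega$ used in every representation, and the isolation of $\omega$ in every $G_i'$ forces every other arc to be disjoint from $A_\omega$. Consequently, all other arcs live inside $S^1 \setminus A_\omega$, which is an open arc topologically equivalent to the real line, so the representation restricts to a simultaneous interval representation of $G_0, \ldots, G_t$. Conversely, any simultaneous interval representation can be embedded in a short arc of the circle with $\omega$ placed in the remainder.

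The key step to justify is therefore the equivalence: $G_0', \ldots, G_t'$ admit a simultaneous circular-arc representation if and only if $G_0, \ldots, G_t$ admit a simultaneous interval representation, which by the previous theorem is equivalent to the \textsc{TotalOrdering} instance being a YES-instance. The remaining routine obligations are membership in \cNP, which follows by guessing a representation and checking it in polynomial time, and preservation of the non-sunflower condition, which is inherited from the original construction since $G_0' \cap G_i' = \{x_i, y_i, z_i, \omega\}$ still varies with $i$.
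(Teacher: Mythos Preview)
Your proposal is correct and follows essentially the same approach as the paper: both add a single isolated vertex common to all graphs so that its arc cuts the circle into a segment, reducing the circular-arc problem to the interval case already handled. Your write-up is in fact more detailed than the paper's, which simply says the extra vertex ``takes the role of breaking the cycle into a segment'' and leaves the verification of the non-sunflower condition implicit.
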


\begin{proof}
Again, problem is in \cNP\ from obvious reasons.

We will proceed in a similar way as in the previous proof. We define graphs
$G_0, G_1, \ldots, G_t$ in the same way as before and we add one common
isolated vertex $x$ to every graph, i.e. $x \in \bigcap_{i=0}^t G_i$. Vertex
$x$ takes the role of breaking the cycle into a segment and thus we can
argue the rest as for interval graphs.
\end{proof}

\bibliographystyle{plain}
\bibliography{simrep_note}

\end{document}